\documentclass[english,reprint, longbibliography, superscriptaddress, breaklinks=true, showkeys, showpacs=false, nofootinbib]{revtex4}
\usepackage[T1]{fontenc}
\usepackage[latin9]{inputenc}
\setcounter{secnumdepth}{3}
\usepackage{color}
\usepackage{babel}
\usepackage{amsmath}
\usepackage{amsthm}
\usepackage{amsfonts}
\usepackage{amssymb}
\usepackage{graphicx}
\usepackage{subfig}
\usepackage{physics}

\newtheorem{teo}{Theorem}

\usepackage[unicode=true,pdfusetitle,
 bookmarks=true,bookmarksnumbered=false,bookmarksopen=false,
 breaklinks=true,pdfborder={0 0 0},backref=false,colorlinks=true]
 {hyperref} 
\usepackage{breakurl}

\makeatletter
\@ifundefined{textcolor}{}
{%
 \definecolor{BLACK}{gray}{0}
 \definecolor{WHITE}{gray}{1}
 \definecolor{RED}{rgb}{1,0,0}
 \definecolor{GREEN}{rgb}{0,1,0}
 \definecolor{BLUE}{rgb}{0,0,1}
 \definecolor{CYAN}{cmyk}{1,0,0,0}
 \definecolor{MAGENTA}{cmyk}{0,1,0,0}
 \definecolor{YELLOW}{cmyk}{0,0,1,0}
}

\pdfoutput=1
\hypersetup{colorlinks=true,citecolor=blue,linkcolor=cyan,urlcolor=blue,filecolor= green, breaklinks=true}
\usepackage{url}
\usepackage{breakurl}


\makeatother

\begin{document}

\title{An uncertainty view on complementarity and a complementarity view on uncertainty}

\author{Marcos L. W. Basso}
\email{marcoslwbasso@mail.ufsm.br}
\address{Departamento de F\'isica, Centro de Ci\^encias Naturais e Exatas, Universidade Federal de Santa Maria, Avenida Roraima 1000, Santa Maria, Rio Grande do Sul, 97105-900, Brazil}

\author{Jonas Maziero}
\email{jonas.maziero@ufsm.br}
\address{Departamento de F\'isica, Centro de Ci\^encias Naturais e Exatas, Universidade Federal de Santa Maria, Avenida Roraima 1000, Santa Maria, Rio Grande do Sul, 97105-900, Brazil}

\selectlanguage{english}%

\begin{abstract}
Since the uncertainty about an observable of a system prepared in a quantum state is usually described by its variance, when the state is mixed, the variance is a hybrid of quantum and classical uncertainties. Besides that, complementarity relations are saturated only for pure, single-quanton, quantum states. For mixed states, the wave-particle quantifiers never saturate the complementarity relation and can even reach zero for a maximally mixed state. So, to fully characterize a quanton it is not sufficient to consider its wave-particle aspect; one has also to regard its correlations with other systems. In this paper, we discuss the relation between quantum correlations and local classical uncertainty measures, as well as the relation between quantum coherence and quantum uncertainty quantifiers. We obtain a complete complementarity relation for quantum uncertainty, classical uncertainty, and predictability. The total quantum uncertainty of a $d$-paths interferometer is shown to be equivalent to the Wigner-Yanase coherence and the corresponding classical uncertainty is shown to be an entanglement monotone. The duality between complementarity and uncertainty is used to derive quantum correlations measures that complete the complementarity relations for $l_{1}$-norm and $l_{2}$-norm coherences. Besides, we show that Brukner-Zeilinger's invariant information quantifies both the wave and particle characters of a quanton and we obtain a sum uncertainty relation for the generalized Gell Mann's matrices.
\end{abstract}

\keywords{Quantum uncertainty; Classical uncertainty; Complementarity relations}

\maketitle

\section{Introduction}
\label{sec:intro}
Quantum phenomena are manifestly unpredictable. While classical uncertainty arises from ignorance, quantum uncertainty is intrinsic. Even for pure quantum states that represent the maximal knowledge that one could have about quantum states, we can only make probabilistic predictions. The situation gets even worse when we consider two incompatible observables of a system. This is captured by the uncertainty relations, like the Heisenberg-Robertson uncertainty relation \cite{Robertson}, which is represented by the expression
\begin{equation}
    \mathcal{V}(\rho,A)\mathcal{V}(\rho,B) \ge \frac{1}{4} \abs{\Tr(\rho[A,B])}^2, \label{eq:hun}
\end{equation}
where $\mathcal{V}(\rho,A) = \Tr \rho A^2 - (\Tr \rho A)^2$ is the variance of the observable $A$ in the quantum state $\rho$, and $\mathcal{V}(\rho,B)$ is defined similarly. The existence of incompatible observables in quantum mechanics is somewhat related to quantum coherence, a kind of quantum superposition \cite{theurer}. However, in experiments, most quantum states are mixed, which means that a part of the unpredictability is actually classical. Since the uncertainty of an observable in a quantum state is usually described using the variance, when the states are mixed, the variance is a hybrid of quantum and classical uncertainties. In Ref. \cite{Luo}, Luo proposed a decomposition of the variance into classical and quantum parts. As pointed out by Luo, the key observation is that the Wigner-Yanase skew information \cite{wigner} can be interpreted as a measure of quantum uncertainty and the classical uncertainty can be captured by the difference between the total variance and the quantum uncertainty.
Later, the same author also established a different uncertainty  relation,  which  is  stronger  than  Eq. (\ref{eq:hun}), by taking into account only the quantum uncertainties \cite{Luu}. More recently, the same decomposition was done,  in Ref. \cite{Korzekwa}, for entropic uncertainty relations. In addition, it is worth mentioning that, in this manuscript, we deal with the classical local uncertainty that arises from the entanglement of the system $A$ with another system $B$ such that the global system state is pure. Therefore, the mixture of the quantum system A is due to the fact that we are ignoring system B. 
This is known as an improper mixture. In contrast, proper mixtures arise from the ignorance that we have about the preparation of the system, and this is known as the ignorance interpretation \cite{Paul}. However, in Refs. \cite{Luo, Luu} Luo did not make distinction between these two types of mixtures.

Another intriguing aspect of quantum mechanics is the wave-particle duality \cite{Bohr}. This characteristic is generally captured, in a qualitative way, by Bohr's complementarity principle. It states that quantons \cite{Leblond} have characteristics that are equally real, but mutually exclusive. It is known that, in a two-way interferometer, such as the Mach-Zehnder interferometer or the double-slit interferometer, the wave aspect is characterized by interference fringes, meanwhile the particle nature is given by the which-way information of the path along the interferometer, so that the complete knowledge of the path destroys the interference pattern and vice-versa. A quantitative version of the wave-particle duality was first investigated by Wooters and Zurek \cite{Wootters}, and later this work was extended by Englert, who derived a wave-particle duality relation \cite{Engle} between distinguishability and visibility as measures of the particle and wave aspects, respectively. In addition, Englert and Bergou \cite{Bethold} pointed out the possible connection between the distinguishability and quantum correlations and even conjectured that an entanglement measure was hidden in the measure of distinguishability. However, there is another way wherein the wave-particle duality has been captured, without introducing path-detecting devices. Greenberger and Yasin \cite{Yasin}, considering a two-beam interferometer, in which the intensity of each beam was not necessarily the same, defined a measure of path information, called predictability. This line of reasoning resulted in a different kind of wave-particle relation
\begin{equation}
    P^2 + V^2 \le 1, \label{eq:cr1}
\end{equation}
where $P$ is the predictability and $V$ is the visibility of the interference pattern. Several important steps have been taken towards the quantification of the wave-particle duality by many authors, such as D\"urr \cite{dur}  and Englert et al. \cite{englert}, who established minimal and reasonable conditions that any visibility and predictability measure should satisfy. As well, with the rapid development of the field of Quantum Information Science, it was suggested that quantum coherence \cite{Baumgratz} would be a good generalization of the visibility measure \cite{Bera, Bagan, Tabish, Mishra}. Meanwhile, predictability is a measure of the knowledge about the quantum level wherein a quanton \cite{Leblond} can be found. These levels can represent, besides the paths on a Mach-Zehnder interferometer, energy levels of an atom \cite{Xu} or, more generally, population levels \cite{Vlatko}. So far, many lines of reasoning were taken for quantifying the wave-particle properties of a quantum system \cite{Angelo, Coles, Hillery, Qureshi, Maziero, Lu}. It's noteworthy that, in \cite{Baumgratz, streltsov, theurer} the authors put forward a resource theory for quantum coherence, wherein is established minimum conditions that any measures of coherence must satisfy. However, it is worth to emphasize that the criteria for measures of coherence are not the same as the criteria for measures of visibility. So much so that Hilbert-Schmidt's (or $l_2$-norm) quantum coherence is considered a good measure of visibility, as shown in Ref. \cite{Maziero}, while it does not meet all the criteria for a good measure of quantum coherence, since it does not satisfy the condition of not increasing under incoherent operations.

Complementarity relations like the one in Eq. (\ref{eq:cr1}) are saturated only for pure, single-particle, quantum states. For mixed states, the left hand side is always less than one and can even reach zero for a maximally mixed state. Hence no information about the wave and particle aspects of the system can be obtained. As noticed by Jakob and Bergou \cite{Janos}, this lack of knowledge about the system is due to quantum entanglement \cite{Bruss, Horodecki}. This means that the information is being shared with another system and this kind of quantum correlation can be seen as responsible for the loss of purity of each subsystem such that, for pure maximally entangled states, it is not possible to obtain information about the local properties of the subsystems. So, to fully characterize a quanton, it is not enough to consider its wave-particle aspect, one has also to account for its correlations with other systems. Therefore, triality relations like the ones in \cite{Janos, Jakob, Marcos}, which completely quantifies the complementarity aspect of a quantum system. These triality relations are also known as complete complementarity relations, since in Ref. \cite{Eberly} the authors interpreted this equality as completing the duality relation given by Eq. (\ref{eq:cr1}), thus turning the inequality into an equality.

Hence, in the context of complementarity relations, we discuss, in this work, the relationship between Luo's criteria for quantum and classical uncertainties and D\"urr-Englert et al.'s criteria for wave-particle duality quantifiers together with the criteria for entanglement measures for global pure states, from which it follows naturally that quantum entanglement gives rise to local classical uncertainties, provided that the quanton is part of a pure bipartite quantum system, while quantum coherence gives rise to quantum uncertainties. In addition, we show that the quantum uncertainty of all $d$ paths is equivalent to the Wigner-Yanase quantum coherence \cite{yu_Cwy}, whereas the classical uncertainty can be taken as a entanglement monotone for bipartite pure cases. These results strengthen the idea stated in Ref. \cite{Sun}, that quantum coherence and quantum uncertainty are dual viewpoints of the same quantum substrate, and extend this statement by relating classical uncertainty with quantum correlations. Finally, by exploring the relation between complementarity and uncertainty, we obtain quantum entanglement measures completing the $l_1$-norm and $l_2$-norm complementarity relations reported in Ref. \cite{Maziero}. Such measures, as well, can be taken as measures of classical uncertainty. However, it is worth pointing out that, for the general case of a bipartite mixed state, the measures of classical uncertainty cannot be taken as measures of entanglement. 

The remainder of our paper is organized as follows: in Sec. \ref{sec:relation}, we discuss the relationship between Luo's criteria for quantum and classical uncertainties and D\"urr-Englert et al.'s criteria for wave-particle duality quantifiers. In Sec. \ref{sec:comp}, we obtain a complete complementarity relation (CCR) involving quantum and classical uncertainties and predictability. In addition, we show that the quantum uncertainty of all $d$-paths is equivalent to the Wigner-Yanase quantum coherence, meanwhile the classical uncertainty can be taken as a correlation quantifier. In Sec. \ref{sec:unce}, by exploring the duality of complementarity and uncertainty, we obtain quantum correlation measures that complete the $l_1$-norm and $l_2$-norm complementarity relations. For last, our conclusions are given in Sec. \ref{sec:conc}.

\section{Relating criteria for uncertainties and criteria for complementarity quantifiers}
\label{sec:relation}
In the formalism of Quantum Mechanics \cite{Messiah}, when the system is in the state $\rho$,  the uncertainty of an observable $A$, here restricted to be an Hermitian operator, like the path of a multi-slit interferometer, is given by the variance 
\begin{equation}
    \mathcal{V}(\rho,A) = \Tr \rho A^2_{0}  = \Tr \rho A^2 - (\Tr \rho A)^2,
\end{equation}
where $A_0 = A - \Tr \rho A$. Since, in general, $\rho$ describes a mixed state, the variance $\mathcal{V}(\rho,A)$ quantifies both quantum and classical uncertainties. Luo \cite{Luo} proposed to split the variance in its quantum and classical parts
\begin{equation}
     \mathcal{V}(\rho,A) =  \mathcal{Q}(\rho,A) +  \mathcal{C}(\rho,A),
\end{equation}
where $\mathcal{Q}(\rho,A)$ and $\mathcal{C}(\rho,A)$ correspond to the quantum and classical uncertainties, respectively. Luo also established a set of reasonable conditions that any measure of quantum and classical uncertainty should satisfy. For $\mathcal{Q}(\rho,A)$, these required properties can be stated as follows:
\begin{itemize}
    \item[Q.1] If $\rho$ is pure, then $\mathcal{V}(\rho,A) = \mathcal{Q}(\rho, A)$ and $\mathcal{C}(\rho, A) = 0$, because there is no classical mixing and all uncertainties are intrinsically quantum.
    \item[Q.2] If $[\rho, A] = 0$, both are diagonal in the same basis and $\rho$ and $A$ behaves like classical variables. Hence, all uncertainties are classical, i.e., $\mathcal{Q}(\rho, A) = 0$ and $\mathcal{V}(\rho,A) = \mathcal{C}(\rho, A)$. 
    \item[Q.3] $\mathcal{Q}(\rho,A)$ must be convex in $\rho$, once classical mixing does not increase quantum uncertainty, i.e., $\mathcal{Q}(\sum_i \lambda_i \rho_i, A) \le \sum_i \lambda_i \mathcal{Q}(\rho_i, A)$ with $\sum_i \lambda_i = 1$, $\lambda_i \in [0,1]$, and $\rho_i$ are valid quantum states.
\end{itemize}
Meanwhile, for $\mathcal{C}(\rho,A)$:
\begin{itemize}
    \item[C.1] The same as Q.1.
    \item[C.2] The same as Q.2.
    \item[C.3] $\mathcal{C}(\rho,A)$ must be concave in $\rho$, once classical mixing increases classical uncertainty, i.e., $\mathcal{C}(\sum_i \lambda_i \rho_i, A) \ge \sum_i \lambda_i \mathcal{C}(\rho_i, A)$ with $\sum_i \lambda_i = 1$, $\lambda_i \in [0,1]$, and $\rho_i$ are well defined quantum states.
\end{itemize}

Also, D\"urr \cite{dur} and Englert et al. \cite{englert} established criteria that can be taken as a standard for checking for the reliability of newly defined predictability measures $P(\rho)$ and interference pattern visibility quantifiers $V(\rho)$. For $P$, these required properties can be stated as follows:
\begin{itemize}
\item[P.1] $P$ must be a continuous function of the diagonal elements of the density matrix.
\item[P.2] $P$ must be invariant under permutations of the states indexes.
\item[P.3] If $\rho_{jj}=1$ for some $j$, then $P$ must reach its maximum value.
\item[P.4] If $\{\rho_{jj}=1/d\}_{j=1}^{d}$, then $P$ must reach its minimum value.
\item[P.5] If $\rho_{jj}>\rho_{kk}$ for some $(j,k)$, the value of $P$ cannot be increased by setting $\rho_{jj}\rightarrow\rho_{jj}-\epsilon$ and $\rho_{kk}\rightarrow\rho_{kk}+\epsilon$, for $\epsilon\in\mathbb{R}_{+}$ and $\epsilon\ll1$.
\item[P.6] $P$ must be a convex function, i.e., $P(\sum_i \lambda_i \rho_i)\le \sum_i \lambda_i P(\rho_i)$, with $\sum_i \lambda_i = 1$, $\lambda_i \in [0,1]$ and for $\rho_i$ being valid density matrices.  
\end{itemize}

Meanwhile, for any measure of the wave aspect $V$ of a quanton:
\begin{itemize}
\item[V.1] $V$ must be a continuous function of the elements of the density matrix.
\item[V.2] $V$ must be invariant under permutations of the states' indexes.
\item[V.3] If $\rho_{jj}=1$ for some $j$, then $V$ must reach its minimum value.
\item[V.4] If $\rho$ is a pure state and $\{\rho_{jj}=1/d\}_{j=1}^{d}$, then $V$ must reach its maximum value.
\item[V.5] $V$ cannot be increased when decreasing $|\rho_{jk}|$ by an infinitesimal amount, for $j\ne k$.
\item[V.6] $V$ must be a convex function, i.e., $V(\sum_i \lambda_i \rho_i)\le \sum_i \lambda_i V(\rho_i)$ with $\sum_i \lambda_i = 1$, $\lambda_i \in [0,1]$, and $\rho_i$ are well defined density matrices.  
\end{itemize} 

In order to explore the relationship between the conditions for a quantum uncertainty measure and those for a visibility measure, let us restrict ourselves to the context of multi-slit interferometry, i.e, let us consider that the observable $A$ is the projection onto one of the $d$-paths of the interferometer: $A = \ketbra{j}$, for some path (state) label $j$. In the extreme case where $\rho$ is pure and $\rho_{jj} = 1/d\ \forall j$, the quantum uncertainty must be maximal $\mathcal{V}(\rho,A) = \mathcal{Q}(\rho, A) = \mathcal{Q}^{max}$, and the visibility also reaches its maximum value. Besides, there is no classical uncertainty $\mathcal{C}(\rho, A) = 0$.  In the other extreme case, when $[\rho, A] = 0$, $\rho$ is an incoherent state in the basis shared by $\rho$ and $A$, thus all uncertainties are classical and $V = \mathcal{Q}(\rho, A) = 0$, since there is no coherence in this basis. Besides, if the state of the quanton is known, we have that $\rho$ is pure, and $\rho_{jj} = 1$ for some state index $j$. Thus $V = \mathcal{Q}(\rho, \ketbra{j}) = \mathcal{C}(\rho, \ketbra{j}) = 0$ and the predictability reaches its maximum value. In addition, the visibility and the quantum uncertainty must be convex functions of $\rho$, since classical mixture does not increase the coherence of $\rho$ and its quantum uncertainty.

On the other hand, the relation between classical uncertainty and quantum correlation is more subtle. It is known that complementarity relations for wave-particle duality are saturated only for pure, single-quanton, quantum states. For a maximally incoherent state the wave and particle quantifiers can reach zero and no information about the wave and particle aspects of the system can be obtained. So, the information is being shared with other systems, and these correlations can be seen as responsible for the increase of entropy of the quanton \cite{Janos}. Thus, if the system $\rho$ is not correlated with other systems, then $\rho$ must be pure. In this case, the classical uncertainty is $\mathcal{C}(\rho, \ketbra{j}) = 0$. Moreover, when $[\rho, A] = 0$, $\rho$ is an incoherent state in the eigenbasis shared by $\rho$ and $A$, all uncertainties are classical. However, we can always purify $\rho$ and think of it as resulting from entanglement with another system \cite{nielsen}. For instance, we can consider $\rho$ entangled with the states of a path detector device such that $\rho$ is incoherent in the path basis. Therefore, in this case, $\mathcal{C}(\rho, \ketbra{j}{j}) \neq 0$ is a signature of quantum entanglement. In addition, maximally incoherent reduced states are used to classify multipartite pure entangled states as maximally entangled. Beyond that, it is known that, for multipartite quantum systems, any entanglement measure must be a convex function \cite{Bruss}. However, the condition (C.3) is related to the particular subsystem $\rho$. Hence, the classical mixture $\rho =  \sum_i \lambda_i \rho_i$ can be recast as the effect of local measurements, which is classified as a Local Operation and Classical Communication procedure (LOCC) \cite{Mark}. This fact is a direct consequence of the Neumark's theorem \cite{Mark}. Hence, any entanglement measure must be concave under classical mixtures. Finally, we could add that the quantum and classical uncertainties should be continuous functions of the density matrix elements and invariant under permutations of the paths (states) indexes.


\section{A complementarity view of uncertainty}
\label{sec:comp}
To introduce quantum uncertainty, Luo considered the following definition \cite{Luo}:
\begin{equation}
    \mathcal{Q}(\rho, A) := \mathcal{I}_{wy}(\rho,A) = -\frac{1}{2}\Tr ([\sqrt{\rho},A_0]^2),
\end{equation}
where $\mathcal{I}(\rho,A)$ is the skew information introduced by Wigner and Yanase, also known as the Wigner-Yanase entropy. As pointed out by Luo \cite{Luo}, their interpretation is that $\mathcal{Q}(\rho, A)$ quantifies the information contents of the quantum state $\rho$ with respect to observables not commuting with (i.e., skew to) the observable $A$. Besides that, $\mathcal{Q}(\rho, A)$ can also be regarded as quantifying the information of observables not commuting with $A$ in the state $\rho$. Because of Bohr's complementarity principle, we can further interpret $\mathcal{Q}(\rho, A)$ as some kind of uncertainty of $A$ itself in $\rho$. Hence, a natural definition of classical uncertainty is:
\begin{equation}
    \mathcal{C}(\rho,A) := \mathcal{V}(\rho, A) - \mathcal{Q}(\rho, A) = \Tr \sqrt{\rho}A_0 \sqrt{\rho}A_0.
\end{equation}

As before, let us consider the observable $A$ as the projection onto one of the paths (or slit) of the interferometer, i.e., $A = \ketbra{j}$, for some path (state) label $j$. It is worth pointed out that, in this work, we restrict ourselves to orthogonal measurements, since the path states can be considered orthogonal to each other. In this case, the quantum uncertainty of the path $j$ is given by
\begin{align}
    \mathcal{Q}(\rho, \ketbra{j}) & = -\frac{1}{2} \Tr ([\sqrt{\rho}, \ketbra{j}_0]^2)\\
    & = - \frac{1}{2}( \expval{\sqrt{\rho}}{j}^2 + \expval{\sqrt{\rho}}{j}^2 - \sum_k \bra{k} \sqrt{\rho} \ketbra{j} \sqrt{\rho} \ket{k} - \expval{\rho}{j})\\
    & = \expval{\rho}{j} - \expval{\sqrt{\rho}}{j}^2.
\end{align}
If $\rho$ is pure, then $\sqrt{\rho}=\rho$ and $\mathcal{Q}(\rho, A) = \expval{\rho}{j} - \expval{\rho}{j}^2$. For $\expval{\rho}{j} := \rho_{jj} = 1/d\ \forall j$, the quantum uncertainty reaches its maximum $\mathcal{Q}^{max} = (d - 1)/d^2$. On the other hand, if the path is known, i.e., $\rho_{kk} = 1$ for some path index $k$, then $\mathcal{Q}(\rho, A) = 0$, even for $k = j$. Now, if $[\rho, \ketbra{j}] = 0\ \forall j$, then $\rho$ is diagonal in the path basis and $\rho_{jj} = (\sqrt{\rho}_{jj})^2\ \forall j$, which implies that $\mathcal{Q}(\rho, A) = 0$. We can also define the quantum uncertainty of all $d$-paths:
\begin{align}
    \mathcal{U}_{q} & := \sum_j \mathcal{Q}(\rho, \ketbra{j}_0) = \sum_j(\expval{\rho}{j} - \expval{\sqrt{\rho}}{j}^2)\\
    & = \sum_j( \sum_k \bra{j}\sqrt{\rho}\ketbra{k}\sqrt{\rho}\ket{j} - \expval{\sqrt{\rho}}{j}^2) = \sum_{j,k} \abs{\bra{j}\sqrt{\rho}\ket{k}}^2 - \sum_j\expval{\sqrt{\rho}}{j}^2\\
    & = \sum_{j\neq k}\abs{\bra{j}\sqrt{\rho}\ket{k}}^2 = C_{wy}(\rho),
\end{align}
where $C_{wy}(\rho)$ is the Wigner-Yanase quantum coherence \cite{yu_Cwy}, which is a bona-fide measure of visibility, as we have shown in Ref. \cite{Maziero}. Besides, $\mathcal{U}_{q}$ also satisfies Luo's criteria for a quantum uncertainty. For the classical uncertainty of the path $j$, we have
\begin{align}
    \mathcal{C}(\rho, \ketbra{j})& =  \Tr \sqrt{\rho} \ketbra{j}_0 \sqrt{\rho} \ketbra{j}_0  = \expval{\sqrt{\rho}}{j}^2 - \expval{\rho}{j} \sum_k \bra{k}\rho \ketbra{j} \rho \ket{k} \\
    & = \expval{\sqrt{\rho}}{j}^2 - \expval{\rho}{j}^2.
\end{align}
If $\rho$ is pure, then $\mathcal{C}(\rho,A) = 0$. On the other hand, if $\rho$ is incoherent, then  $\mathcal{C}(\rho,A) \neq 0$, and for the extreme case $\rho = \sum_j \frac{1}{d} \ketbra{j}$, the classical uncertainty reaches its maximum $\mathcal{C}^{max} = (d - 1)/d^2$. Meanwhile, the classical uncertainty of all $d$-paths is given by 
\begin{align}
  \mathcal{U}_c = \sum_j \mathcal{C}(\rho, \ketbra{j}) = \sum_j(\expval{\sqrt{\rho}}{j}^2 - \expval{\rho}{j}^2).
\end{align}
Now, summing both uncertainties, we have
\begin{align}
    \mathcal{U}_q + \mathcal{U}_c & = \sum_{j,k} \abs{\bra{j}\rho\ket{k}}^2 - \sum_j \expval{\rho}{j}^2  = \Tr (\sqrt{\rho})^2 - \sum_j \expval{\rho}{j}^2 = 1 - \sum_j \expval{\rho}{j}^2\\
    & = S_l(\rho_{diag}), \label{eq:unc}
\end{align}
where $S_l(\rho)=1-\Tr\rho^{2}$ is the linear entropy. So, we can establish a complementarity relation between classical and quantum uncertainties: 
\begin{equation}
    \mathcal{U}_q + \mathcal{U}_c \le S_l^{max}.
\end{equation}
But it is possible to explore Eq. (\ref{eq:unc}) even further. Since for $d$ paths with probabilities $\rho_{11}, \rho_{22},\cdots, \rho_{dd}$, the lack of information about the $j$-th path is given by $\rho_{jj}(1 - \rho_{jj})$, the total lack of information about all the $d$-paths is given by $\sum_{j}\rho_{jj}(1 - \rho_{jj}) = 1 - \sum_j \rho^2_{jj}$, which is equal to $S_l(\rho_{diag}) = 1 - \Tr \rho^2_{diag}$ \cite{chen}. In other words, defining $\Pi_j := \ketbra{j}$ as the the projection onto the path (state) index $j$, the uncertainty of the path index $j$ is given by 
\begin{equation}
    \mathcal{V}(\rho,\Pi_j) = \Tr \rho \Pi_j^2 - (\Tr \rho \Pi_j)^2 = \rho_{jj} - \rho^2_{jj},
\end{equation} 
such that the total uncertainty of the paths is obtained by summing over $j$:
\begin{equation}
    \sum_j \mathcal{V}(\rho,\Pi_j) = 1 - \sum_j \rho^2_{jj}.
\end{equation}
Hence, as expected, $\mathcal{U}_q + \mathcal{U}_c = \sum_j \mathcal{V}(\rho,\Pi_j)$. Beyond that, Eq. (\ref{eq:unc}) can also be rewritten as a complete complementarity relation between uncertainty and predictability:
\begin{equation}
    \mathcal{U}_q + \mathcal{U}_c + P_l = S_l^{max}. \label{eq:unpl}
\end{equation}
Once $S_l(\rho_{diag})$ is measuring our total uncertainty (or ignorance) about the paths, we can interpret $P_l(\rho): = S_l^{max} - S_l(\rho_{diag})$ as measuring our capability of making a correct guess about the possible outcomes in the path basis, i.e., if our total uncertainty about the path decreases, our capability of making a correct guess has to increase.  Actually, $P_l(\rho)$ is a bona-fide predictability measure \cite{Maziero}. It is worthwhile emphasizing that predictive information is defined outside the realm of quantum information science as the difference between a prior and posterior entropy measures, and can be interpreted as the average information about the state contained in a
prediction \cite{Schneider}. The following theorem, which  relies largely on known results \cite{John, Bera, Dieguez}, is presented here as extensions regarding the facts that the coherences of $\rho$ give rise to quantum uncertainties and that the classical uncertainty is due to the possible quantum correlations with others systems, if we consider $\rho$ as part of a pure multipartite quantum system. 

\begin{teo}
 Let $\ket{\Psi}_{A,B}$ be a bipartite pure state of a quantum system. Then, quantum correlations give rise to local classical uncertainties and quantum coherences give rise to quantum uncertainties. Conversely, classical uncertainties are signatures of quantum correlations and quantum uncertainties are signatures of quantum coherences.
\end{teo}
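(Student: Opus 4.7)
The plan is to translate each of the four implications into a statement about the reduced state $\rho_A = \Tr_B(\ketbra{\Psi})$. By the Schmidt decomposition $\ket{\Psi}_{A,B} = \sum_i \sqrt{\lambda_i}\,\ket{i}_A\otimes\ket{i}_B$ one has $\rho_A = \sum_i \lambda_i\,\ketbra{i}$, so $\ket{\Psi}$ is a product state (has no quantum correlations) iff exactly one Schmidt coefficient equals one iff $\rho_A$ is pure. The theorem therefore reduces to two basis-level equivalences: (i) $\mathcal{U}_q(\rho_A)>0$ iff $\rho_A$ has off-diagonal entries in the path basis (is coherent), and (ii) $\mathcal{U}_c(\rho_A)>0$ iff $\rho_A$ is mixed.

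Claim (i) is essentially immediate from the identities already derived in the excerpt: $\mathcal{U}_q = \sum_{j\neq k}|\bra{j}\sqrt{\rho_A}\ket{k}|^2 = C_{wy}(\rho_A)$ vanishes iff $\sqrt{\rho_A}$ is diagonal in the path basis, and because the square root of a positive operator shares its eigenbasis, this happens iff $\rho_A$ itself is path-diagonal. The same argument, applied projector-by-projector, shows that $\mathcal{Q}(\rho_A,\ketbra{j})>0$ requires some off-diagonal $\bra{j}\sqrt{\rho_A}\ket{k}\neq 0$, giving both directions of the coherence--quantum-uncertainty correspondence.

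Claim (ii) is the substantive part. Expanding in the eigenbasis of $\rho_A$ and setting $a_{ij}:=|\braket{i}{j}|^2$ gives $\expval{\rho_A}{j} = \sum_i \lambda_i\,a_{ij}$ and $\expval{\sqrt{\rho_A}}{j} = \sum_i \sqrt{\lambda_i}\,a_{ij}$. The factorization
\begin{equation*}
\mathcal{C}(\rho_A,\ketbra{j}) = \bigl(\expval{\sqrt{\rho_A}}{j}-\expval{\rho_A}{j}\bigr)\bigl(\expval{\sqrt{\rho_A}}{j}+\expval{\rho_A}{j}\bigr)
\end{equation*}
exposes the first factor as $\sum_i(\sqrt{\lambda_i}-\lambda_i)\,a_{ij}\ge 0$, using that $\sqrt{\lambda}\ge\lambda$ on $[0,1]$; since the second factor is also nonnegative, $\mathcal{C}(\rho_A,\ketbra{j})\ge 0$ always, and vanishes for pure $\rho_A$ by the excerpt. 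The main obstacle is the converse, which I expect to carry out as follows: $\mathcal{U}_c(\rho_A)=0$ forces each nonnegative summand to vanish, so $(\sqrt{\lambda_i}-\lambda_i)\,a_{ij}=0$ for every $i,j$; because $\{\ket{j}\}$ is an orthonormal basis, every eigenvector $\ket{i}$ has $a_{ij}>0$ for at least one $j$, driving $\lambda_i\in\{0,1\}$ for all $i$; combined with $\sum_i\lambda_i=1$, exactly one Schmidt coefficient is unity, so $\rho_A$ is pure and $\ket{\Psi}$ factorizes. This closes the classical-uncertainty--quantum-correlation equivalence and completes the proof.
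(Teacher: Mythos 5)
Your proposal is mathematically sound, but it takes a genuinely different route from the paper. The paper's proof is operational: it models the quanton interacting with a which-path detector via a von Neumann pre-measurement, $U(\ket{j}\otimes\ket{d_0})\to\ket{j}\otimes\ket{d_j}$, and then reads the theorem off the three regimes of detector distinguishability --- orthogonal detector states give an incoherent $\rho_A$ (purely classical uncertainty), an uncoupled detector leaves $\rho_A$ pure (purely quantum uncertainty), and partially overlapping detector states trade quantum for classical uncertainty at fixed $P_l$ via the complete complementarity relation $\mathcal{U}_q+\mathcal{U}_c+P_l=S_l^{max}$; the converse directions are argued by noting that a pure state with quantum uncertainty must be a superposition in the path basis, and that an incoherent mixed state can always be purified by entangling it with an ancilla. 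You instead sharpen the loosely worded statement into two precise equivalences --- $\mathcal{U}_q(\rho_A)>0$ iff $\rho_A$ has path-basis coherences, and $\mathcal{U}_c(\rho_A)>0$ iff $\rho_A$ is mixed (equivalently, iff $\ket{\Psi}_{A,B}$ is entangled, by the Schmidt decomposition) --- and prove them directly from the closed forms $\mathcal{U}_q=\sum_{j\neq k}\abs{\bra{j}\sqrt{\rho_A}\ket{k}}^2$ and $\mathcal{C}(\rho_A,\ketbra{j})=\expval{\sqrt{\rho_A}}{j}^2-\expval{\rho_A}{j}^2$. Your key steps check out: $\sqrt{\rho_A}$ is path-diagonal iff $\rho_A$ is (since $\lambda\mapsto\sqrt{\lambda}$ is injective on $[0,\infty)$, the eigenspaces coincide), and your termwise-vanishing argument $(\sqrt{\lambda_i}-\lambda_i)a_{ij}=0$ together with $\sum_j a_{ij}=1$ correctly forces $\lambda_i\in\{0,1\}$. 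What each approach buys: the paper's detector construction exhibits the physical mechanism by which entanglement converts quantum into classical uncertainty and connects directly to the complementarity narrative, but parts of it (e.g., the purification step in the converse) are heuristic; your argument is self-contained, makes the ``signature of'' claims into honest biconditionals, and supplies the nonnegativity of each $\mathcal{C}(\rho_A,\ketbra{j})$ as a byproduct --- a fact the paper only establishes later, in the proof that $\mathcal{U}_c$ is an entanglement monotone.
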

\begin{proof}
Without loss of generality, in the context of $d$-slit interferometry, let $\ket{j}$ describe the state corresponding to the quanton taking the $j$-th path, the general state is given by $\ket{\psi}_A = \sum_j a_j \ket{j}$, where $a_j$ represents the probability amplitude of the quanton to take the $j$-th path, and $\{\ket{j}\}_{j = 1}^d$ can be regard as a orthonormal path basis. Consider now a path-detector which is capable of recording which path the quanton followed. This path detector is also a quantum object. In a von Neumann pre-measurement (\cite{John}, Chapters V and VI), the detector interacts with a quanton and gets entangled with it, i.e., $U(\ket{j} \otimes \ket{d_0}) \to \ket{j} \otimes \ket{d_j}$, where $\ket{d_0}$ is the initial detector state and $U$ represents the unitary evolution operator. Then, the state of the quanton and the detector is given by
\begin{equation}
    \ket{\Psi}_{A,B} = \sum_j a_j \ket{j} \otimes \ket{d_j},
\end{equation}
where $\ket{d_j}$ is the state of the path-detector corresponding to the quanton following the $j$-th path, and $\ket{\Psi}_{A,B}$ represents a bipartite pure quantum system.  Also, without loss of generality, we consider the detector states $\{\ket{d_j}\}_{j = 1}^d$ to be normalized, but not necessarily orthogonal. Now, if we consider only the state of the quanton, we have a mixed state described by \cite{John}
\begin{equation}
    \rho_A = \Tr_B(\ket{\Psi}_{A,B}\bra{\Psi}) = \sum_{j,k} a_j a^*_k \braket{d_k}{d_j} \ketbra{j}{k}.
\end{equation}
If the states of the detector are completely distinguishable, i.e., $\braket{d_k}{d_j} = \delta_{jk}$, then $\rho_A = \sum_j \abs{a_j}^2 \ketbra{j}{j}$ is an incoherent state, and $\rho_A$ commutes with any $\ketbra{j}{j}$. Hence, we have just classical uncertainty. On the other hand, if the detector does not couple with the quanton, then the bipartite quantum system is separable, and the state of the quanton is pure. Therefore, the uncertainty is only quantum. For last, if the detector states are not mutually orthogonal to each other, the off-diagonal elements of the reduced density matrix $\rho_A = \sum_{j,k} a_j a^*_k \braket{d_k}{d_j} \ketbra{j}{k}$ do not necessarily vanish. But, the coherence of the quanton will be certainly reduced in comparison with the pure state $\ket{\psi}_A = \sum_j a_j \ket{j}$ \cite{Bera}. Thus, part of the quantum uncertainty will be transformed into classical uncertainty, and we will have a mixture of both. It is easy to see this from Eq. (\ref{eq:unpl}), since $S_l^{max}$ is a constant and $P_l(\rho)$ is not affected by the states of the path detector. Conversely, if we have only quantum uncertainty, $\rho$ describes a pure state and there will be at least a superposition of two elements of the path basis, otherwise the path will be known, which contradicts the hypothesis that we have quantum uncertainty. At the other end, if we have only classical uncertainty, $\rho$ is incoherent in the path basis. However, it is always possible to purify $\rho$ by entangling it with another system. The trivial case where $\rho$ is a projector on one of the uni-dimensional sub-spaces of the path basis, then $\rho$ is pure and the path is known, which contradicts the hypothesis that we have classical uncertainty.
\end{proof}

Hence, if we accept that $\mathcal{U}_q = C_{wy}(\rho)$ is measuring the wave aspect and $P_l(\rho)$ is a measure of the particle aspect of the quanton, then $\mathcal{U}_c = \sum_j \Tr \sqrt{\rho} \ketbra{j}_0 \sqrt{\rho} \ketbra{j}_0$ can be considered a measure of entanglement of the quanton with other systems or degrees of freedom, provided that the global state is pure. 

\begin{teo}
 Let $\ket{\Psi}_{A,B} \in \mathcal{H}_A \otimes \mathcal{H}_B$ be the state of a bipartite pure quantum system, with $\rho_A = \Tr_B(\ket{\Psi}_{A,B}\bra{\Psi})$. Then $\mathcal{U}_c := \sum_j \mathcal{C}(\rho_A, \ketbra{j}) = \sum_j \Tr \sqrt{\rho_A} \ketbra{j}_0 \sqrt{\rho_A} \ketbra{j}_0$ is a entanglement monotone\footnote{Entanglement monotones are nonnegative functions whose value does not increase under local operations and classical communication (LOCC) \cite{Horodecki}.}, with $\sum_{j = 1}^d \ketbra{j} = I_{d \times d}$.
\end{teo}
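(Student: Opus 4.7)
The plan is to verify the defining properties of an entanglement monotone on bipartite pure states by establishing non-negativity with the correct boundary condition on product states, then concavity of $\mathcal{U}_c$ in $\rho_A$, and finally invoking Vidal's characterization of pure-state monotones as concave functionals of the reduced density matrix. First I would recast each summand as $\mathcal{C}(\rho_A,\ketbra{j}) = \Tr(\Pi_j\sqrt{\rho_A}\Pi_j\sqrt{\rho_A}) - (\Tr\rho_A\Pi_j)^2$, with $\Pi_j := \ketbra{j}$. Non-negativity $\mathcal{U}_c \ge 0$ follows from the spectral expansion $\rho_A = \sum_k \lambda_k \ketbra{e_k}$: one has $\bra{j}\sqrt{\rho_A}\ket{j} = \sum_k \sqrt{\lambda_k}|\braket{j}{e_k}|^2 \ge \sum_k \lambda_k |\braket{j}{e_k}|^2 = \rho_{A,jj}$, since $\sqrt{\lambda} \ge \lambda$ on $[0,1]$, and squaring preserves the inequality on non-negative numbers. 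For a product $\ket{\Psi}_{A,B}$ the reduced state $\rho_A$ is pure, so $\sqrt{\rho_A} = \rho_A$ and $\mathcal{U}_c = 0$, giving the required boundary condition.

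The crucial step is concavity of $\mathcal{U}_c$ in $\rho_A$. Here the key tool is Lieb's concavity theorem, which guarantees that $\rho \mapsto \Tr(K^{\dagger} \sqrt{\rho} K \sqrt{\rho})$ is concave in $\rho$ for any fixed operator $K$; choosing $K = \Pi_j$ yields concavity of $\bra{j}\sqrt{\rho}\ket{j}^2$. Since $\rho \mapsto \rho_{jj}^2$ is convex (the square of a linear functional), its negative is concave, so $\mathcal{U}_c$ is a sum of $2d$ concave functions of $\rho_A$ and is therefore itself concave. A concave functional of the reduced state $\rho_A$ that vanishes on pure states then furnishes an entanglement monotone on bipartite pure states via Vidal's theorem, completing the argument.

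The main obstacle I foresee is reconciling the basis-dependence of $\mathcal{U}_c$ with the standard unitary-invariance requirement in Vidal's criterion: $\mathcal{U}_c$ depends explicitly on the fixed path basis $\{\ket{j}\}_{j=1}^{d}$ of $\mathcal{H}_A$, whereas unrestricted LOCC includes arbitrary local unitaries on $A$, and one can check directly that $\mathcal{U}_c(U_A \rho_A U_A^{\dagger}) \ne \mathcal{U}_c(\rho_A)$ in general. I would address this either by restricting the admissible local operations on $A$ to those preserving the distinguished path basis (the ``incoherent'' operations of the coherence resource theory, under which concavity alone suffices for LOCC monotonicity on average), or by identifying the path basis with the Schmidt basis of $\rho_A$, in which case $\mathcal{U}_c$ collapses to the linear entropy $S_l(\rho_A) = 1 - \Tr\rho_A^{2}$, a standard entanglement monotone for bipartite pure states whose monotonicity can be invoked verbatim.
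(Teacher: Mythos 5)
Your route differs from the paper's on every step except the boundary condition. For non-negativity the paper writes each summand as $\Tr X_j^{\dagger}X_j$ with $X_j=\rho_A^{1/4}\ketbra{j}_0\rho_A^{1/4}$, while you use the spectral inequality $\expval{\sqrt{\rho_A}}{j}\ge\expval{\rho_A}{j}$; both are correct and equally elementary. For the monotonicity step the paper argues (i) invariance under local unitaries and (ii) Schur-concavity in the Schmidt coefficients via the sign of $(\lambda_1-\lambda_2)(\partial_{\lambda_1}\mathcal{U}_c-\partial_{\lambda_2}\mathcal{U}_c)$, whereas you prove concavity of $\mathcal{U}_c$ in $\rho_A$ by applying Lieb's concavity theorem to $\Tr(\Pi_j\sqrt{\rho}\,\Pi_j\sqrt{\rho})$ and then appeal to Vidal's characterization. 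Your concavity argument is correct and is more explicit than anything in the paper, which only asserts that classical mixing is ``a special type of LOCC'' and defers to Luo; Lieb's theorem is indeed the standard way to verify Luo's condition C.3 for this functional.

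The genuine issue is the one you flag yourself and then leave unresolved: Vidal's criterion requires $f(\rho_A)$ to be unitarily invariant, i.e.\ a symmetric concave function of the spectrum alone, and the fixed-path-basis $\mathcal{U}_c$ is not. (Concretely, for a qubit with eigenvalues $(0.9,0.1)$ one gets $\mathcal{U}_c=2\lambda(1-\lambda)=0.18$ when the path basis is the eigenbasis but $\sqrt{\lambda(1-\lambda)}=0.3$ in the Hadamard-rotated basis, so a local unitary can increase $\mathcal{U}_c$.) You should be aware that the paper's own proof does not escape this: its ``invariance under local unitaries'' bullet establishes only the covariance $\mathcal{C}(U_A\rho_A U_A^{\dagger},\ketbra{j})=\mathcal{C}(\rho_A,U_A^{\dagger}\ketbra{j}U_A)$, i.e.\ invariance when the reference basis is rotated along with the state, not invariance of the fixed-basis functional. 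Your remedy (b) --- evaluate in the Schmidt basis, where $\mathcal{U}_c$ collapses to the linear entropy $S_l(\rho_A)$ --- is essentially the resolution the authors themselves adopt in the follow-up reference \cite{Leopoldo}, where the monotone is defined through the Schmidt coefficients. But as submitted your proof is incomplete: it ends by offering two alternatives that establish two different statements (monotonicity under incoherent operations versus monotonicity of a spectrum-only functional) and carries out neither. Commit to one --- preferably (b), stated as: the permutation-symmetric concave function $\sum_k(\lambda_k-\lambda_k^2)$ of the Schmidt coefficients is an entanglement monotone by Vidal --- and the argument closes.
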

\begin{proof}
\begin{itemize}
    \item If $\ket{\Psi}_{A,B}$ is separable, then $\rho_A = \Tr_B(\ket{\Psi}_{A,B}\bra{\Psi})$ is pure and $\mathcal{U}_c = \sum_j(\expval{\sqrt{\rho_A}}{j}^2 - \expval{\rho_A}{j}^2) = 0.$ Conversely, if $\mathcal{U}_c = 0$, then $\sqrt{\rho_A} = \rho_A$, which implies that $\rho_A$ is pure, and thus separable.
    \item $\mathcal{U}_c \ge 0$, once that $\mathcal{U}_c :=  \sum_j \Tr \sqrt{\rho_A} \ketbra{j}_0 \sqrt{\rho_A} \ketbra{j}_0  =  \sum_j \Tr \rho_A^{1/4} \ketbra{j}_0 \rho_A^{1/4} \rho_A^{1/4} \ketbra{j}_0 \rho_A^{1/4} = \sum_j \Tr X_j^{\dagger} X_j \ge 0$, where $X_j := \rho_A^{1/4} \ketbra{j}_0 \rho_A^{1/4}$.
    \item $\mathcal{U}_c$ is invariant under unitary local transformations. To see this, let $U_A \otimes U_B \ket{\Psi}_{A,B}$, where $U_A, U_B$ are unitary operators in $\mathcal{H}_A, \mathcal{H}_B$, respectively. Thus $\rho'_A = U_A \rho_A U_A^{\dagger}$. Following Ref. \cite{Shu}, it is enough to note that \begin{align}
    &\sum_j \mathcal{C}(U_A \rho_A U_A^{\dagger}, \ketbra{j}_0 ) = \sum_j \mathcal{C}( \rho_A, U_A^{\dagger} \ketbra{j}_0 U_A),
    \end{align}
    which implies that, for any local unitary transformation $U_A$, the set $\{U^{\dagger}_A \ketbra{j}{j}U^{\dagger} \}$ still is an orthonormal basis. Hence, $\mathcal{U}_c$ is invariant under unitary transformations. 
    \item $\mathcal{U}_c$ does not increase under classical mixing of $\rho_A$, which is a special type of LOCC \cite{Luo}. More generally, using the Schmidt decomposition $\ket{\Psi}_{A,B} = \sum_k \sqrt{\lambda_k}\ket{\phi_k}_A \otimes \ket{\psi_k}_B$, we can write $\sqrt{\rho_A} =  \sum_k \sqrt{\lambda_k}\ketbra{\phi_k}$. Thus
    $\mathcal{U}_c = \sum_j \Tr(\sum_k \sqrt{\lambda_k}\ketbra{\phi_k} \ketbra{j}_0 \sum_l \sqrt{\lambda_l}\ketbra{\phi_l}\ketbra{j}_0)$ is obviously invariant under the permutation of the Schmidt coefficients. Besides,
    \begin{equation} \frac{\partial \mathcal{U}_c}{\partial \lambda_m} = \sum_j \Tr \lambda^{-1/2}_m\ketbra{\phi_m} \ketbra{j}_0 \sum_l \sqrt{\lambda_l}\ketbra{\phi_l}\ketbra{j}_0, \nonumber 
    \end{equation}
    for $ m = 1,2$. Without loss of generality, if $\lambda_1 \ge \lambda_2$, then $\lambda^{-1/2}_1 \le \lambda^{-1/2}_2$ and
    \begin{equation}
        (\lambda_1 - \lambda_2)(\frac{\partial \mathcal{U}_c}{\partial \lambda_1} - \frac{\partial \mathcal{U}_c}{\partial \lambda_2}) \le 0.
    \end{equation}
Therefore $\mathcal{U}_c$ is monotonously decreasing under LOCC \cite{Mintert}.    
\end{itemize}
\end{proof}

It is worth pointing out the we are not claiming that $\mathcal{U}_c$ is an entanglement measure in the sense of concurrence, which applies to any bipartite quantum system (pure and mixed). We are claiming that $\mathcal{U}_c$ can be taken as a entanglement monotone, just as the von-Neumann entropy can be taken as a entanglement monotone, for a bipartite pure quantum system. Actually, in Ref. \cite{Leopoldo}, by a different route, we succeed in showing that $U_c$ is an entanglement monotone. 

\subsection{Brukner-Zeilinger invariant information and its relation with complementarity}
It is noteworthy the apparent similarity between the predictability $P_l(\rho) := S_l^{max} - S_l(\rho_{diag}) = Tr \rho^2_{diag} - 1/d$ and the Brukner-Zeilinger (BZ) invariant information $I_{BZ}(\rho) := Tr \rho^2 - 1/d$ \cite{Brukner}. However, there is a fundamental difference between these quantities: the predictability is basis-dependent while the BZ information is not. We can see this by considering a 2-level quantum system, whose state space is $\mathbb{C}^2$. Instead of using a Mach-Zehnder interferometer, we choose particles with spin-$1/2$ whose magnetic moment is measured using a Stern-Gerlach apparatus \cite{Sakurai}. The observables that we will consider are the components of the magnetic moment of these particles in the direction $z$, $S_z = (\hbar/2) \ketbra{z_+} - (\hbar/2)\ketbra{z_-}$, and in the direction $x$, $S_x = (\hbar/2) \ketbra{x_+} - (\hbar/2)\ketbra{x_-}$, where $\hbar$ is Planck's constant, $\ket{z_+} = [1 \ 0]^{\dagger}$, $\ket{z_-} = [0 \ 1]^{\dagger}$, and $\ket{x_{\pm}} = (\ket{z_+} \pm \ket{z_-})/\sqrt{2}$. In this case, $\{\ket{z_+},\ket{z_-}\}$ plays the role of the path basis, and $S_z$ can be the observable related to the path information. Meanwhile, $S_x$ is an incompatible observable of $S_z$. Now, if we consider an ensemble of particles prepared in the state
\begin{equation}
\rho = p_1 \ketbra{z_+} + p_2 \ketbra{x_+},
\label{eq:eqr}
\end{equation}
with $p_1 + p_2 = 1$, which possesses quantum uncertainty due to the incompatible observables $\ketbra{z_+}$ and $\ketbra{x_+}$, therefore possessing quantum coherence in the basis $\{\ket{z_+},\ket{z_-}\}$. We can see that $I_{BZ}(\rho)$ reaches a maximum when $p_1 = 0$ or $1$ (and $p_2 = 0$ or $1$) while $P_l(\rho)$ reaches a maximum when $p_1 = 1$ and a minimum when $p_1 = 0$, once the predictability is a measure related to the state (path) basis $\{\ket{z_+},\ket{z_-}\}$. In contrast, let's consider the state 
\begin{equation}
\sigma = p_1 \ketbra{z_+} + p_2 \ketbra{z_-},
\label{eq:eqs}
\end{equation}
with $p_1 + p_2 = 1$, which possesses only classical uncertainty. If we consider $p_1 = p_2 = 1/2$, then $I_{BZ}(\sigma) = 0$, whereas $I(\rho) = 1/4$, and $P_l(\rho) = P_l(\sigma)$. Thus, the BZ information can be lifted by quantum uncertainties while the behavior of $P_l$ is the same regardless of whether the nature of uncertainty is quantum, classical, or even a mixture of both. This stems from the fact that BZ information can be taken as a measure of the local properties of a quanton, i.e., its particle-wave nature. In Fig. \ref{fig:inv}, we plot the behavior of $I_{BZ}(\rho),I_{BZ}(\sigma), P_l:= P_{l}(\rho) = P_{l}(\sigma)$ as function of $p_1$. 
\begin{figure}[t]
\includegraphics[scale=0.6]{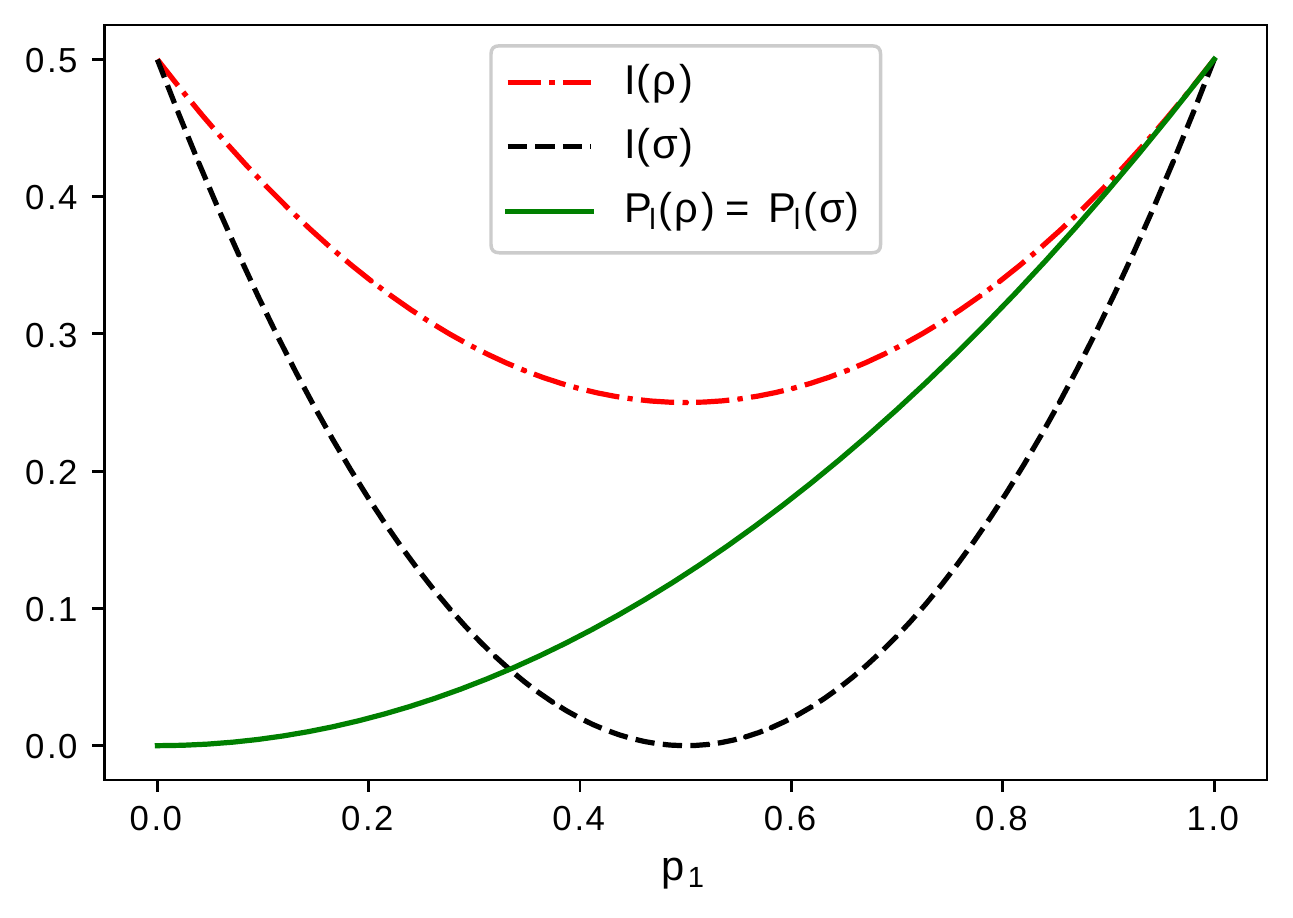}
\caption{(Color online) Brukner-Zeilinger invariant information and linear predictability of the states (\ref{eq:eqr}) and (\ref{eq:eqs}).}
\label{fig:inv}
\end{figure}

\begin{teo}
The Brukner-Zeilinger invariant information $I_{BZ}(\rho) := Tr \rho^2 - 1/d$ measures the local aspects of a quanton, i.e., its particle and wave aspects.
\end{teo}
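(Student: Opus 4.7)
The plan is to prove the theorem by showing that $I_{BZ}(\rho)$ decomposes, in any fixed orthonormal path basis, as the sum of a bona fide predictability and a bona fide visibility. Concretely, I expect to establish the identity
\begin{equation}
I_{BZ}(\rho) \;=\; P_{l}(\rho) \;+\; C_{hs}(\rho), \nonumber
\end{equation}
where $P_{l}(\rho) = \sum_{j}\rho_{jj}^{2} - 1/d$ is the linear predictability already shown above to be a legitimate measure of the particle aspect, and $C_{hs}(\rho) = \sum_{j\neq k}|\rho_{jk}|^{2}$ is the Hilbert--Schmidt ($l_{2}$-norm) coherence, which was argued in Ref.~\cite{Maziero} to be a bona fide visibility quantifier. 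Since the right-hand side is manifestly a sum of a particle quantifier and a wave quantifier evaluated on $\rho$, this identity proves the claim.

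The steps would be, in order: first, expand $\Tr\rho^{2}$ in the chosen path basis $\{\ket{j}\}_{j=1}^{d}$ using $\Tr\rho^{2} = \sum_{j,k}\bra{j}\rho\ket{k}\bra{k}\rho\ket{j} = \sum_{j,k}|\rho_{jk}|^{2}$, relying on the Hermiticity $\rho_{kj} = \rho_{jk}^{*}$. Second, split this double sum into its diagonal contribution $\sum_{j}\rho_{jj}^{2}$ and its off-diagonal contribution $\sum_{j\neq k}|\rho_{jk}|^{2}$. Third, subtract $1/d$ from the diagonal piece to recognise $P_{l}(\rho)$, and identify the off-diagonal piece with $C_{hs}(\rho)$. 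Fourth, comment that although $I_{BZ}(\rho)$ itself is basis-independent, the split into particle-like and wave-like contributions is basis-dependent, which is exactly what one expects of a genuine particle/wave decomposition: once a measurement context (e.g.\ the paths of the interferometer) is fixed, the same total local information is partitioned into predictability and visibility.

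I would then close by relating the result to the example worked out just before the theorem: for $\sigma = p_{1}\ketbra{z_{+}}{z_{+}} + p_{2}\ketbra{z_{-}}{z_{-}}$, $C_{hs}(\sigma)=0$ and $I_{BZ}(\sigma)=P_{l}(\sigma)$, recovering the purely particle-like case; for $\rho = p_{1}\ketbra{z_{+}}{z_{+}} + p_{2}\ketbra{x_{+}}{x_{+}}$, the $\ket{x_{+}}$ term contributes nontrivial off-diagonals in the $\{\ket{z_{\pm}}\}$ basis, so $C_{hs}(\rho)>0$ and $I_{BZ}(\rho)$ is genuinely lifted above $P_{l}(\rho)$, consistent with Fig.~\ref{fig:inv}.

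I do not anticipate a serious obstacle: the computation is essentially an orthonormal-basis expansion of $\Tr\rho^{2}$, and the interpretive content (that $P_{l}$ is a predictability and $C_{hs}$ is a visibility) has already been imported from the cited references. The only subtle point to articulate clearly is the contrast between the basis-invariance of $I_{BZ}$ and the basis-dependence of each summand; this should be stated so that the reader understands that the theorem does not conflict with the explicit remark made earlier that predictability is basis-dependent whereas BZ information is not.
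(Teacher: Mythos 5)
Your proposal is correct and follows essentially the same route as the paper: expanding $\Tr\rho^{2}=\sum_{j,k}|\rho_{jk}|^{2}$ in the path basis, splitting it into the diagonal part $\sum_{j}\rho_{jj}^{2}-1/d=P_{l}(\rho)$ and the off-diagonal part $\sum_{j\neq k}|\rho_{jk}|^{2}=C_{hs}(\rho)$, and invoking the previously established bona-fide status of these two quantifiers. The extra remarks on basis-(in)dependence and the worked examples are consistent additions but not part of the paper's own, more terse, proof.
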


\begin{proof}
The proof follows directly by the definition of BZ information:
\begin{align}
    I_{BZ}(\rho) & := Tr \rho^2 - 1/d = \sum_{j,k}\abs{\rho_{jk}}^2 - 1/d \\
    & = \sum_j \rho^2_{jj} - 1/d + \sum_{j \neq k} \abs{\rho_{jk}}^2 \\
    & = P_l(\rho) + C_{hs}(\rho),
\end{align}
where $C_{hs}(\rho) := \sum_{j \neq k} \abs{\rho_{jk}}^2$ is the Hilbert-Schmidt quantum coherence \cite{Jonas}, which is also a bona-fide measure of visibility \cite{Maziero}, that was already used in the works by Jakob and Bergou \cite{Janos, Jakob}.
\end{proof}

\section{An uncertainty view of complementarity}
\label{sec:unce}
Within this framework, we can interpret any complete complementarity relation in terms of uncertainty. For instance, for any quantum state $\rho$ of dimension $d$, the relative entropy of coherence is defined as  \cite{Baumgratz}
\begin{align}
    C_{re}(\rho) = \min_{\iota \in I} S_{vn}(\rho||\iota),
\end{align}
where $I$ is the set of all incoherent states, and $S_{vn}(\rho||\iota) = \Tr(\rho \ln \rho - \rho \ln \iota)$ is the relative entropy. The minimization procedure implies that $\iota = \rho_{diag} = \sum_{i} \rho_{ii} \ketbra{i}$, thus 
\begin{align}
    C_{re}(\rho) = S_{vn}(\rho_{diag}) - S_{vn}(\rho) \label{eq:cre}.
\end{align}
Once $C_{re}(\rho) \le S_{vn}(\rho_{diag})$, it is possible to obtain an incomplete complementarity relation from this inequality:
\begin{align}
    C_{re}(\rho) + P_{vn}(\rho) \le \ln d \label{eq:cr6},
\end{align}
with $P_{vn}(\rho) := \ln d - S_{vn}(\rho_{diag}) = \ln d + \sum_{i} \rho_{ii} \ln \rho_{ii}$ as a measure of the predictability, already defined in Refs. \cite{englert, Maziero}. Such measure is only possible to define because we can interpret the diagonal elements of $\rho$ as a probability distribution, which is a consequence of the properties of $\rho$ \cite{Maziero}. The complementarity relation (\ref{eq:cr6}) is incomplete due to the presence of correlations. However, if $\rho$ is a subsystem of a bipartite pure quantum system $\ket{\Psi}_{A,B}$, which allows us to take $S_{vn}(\rho)$ as a measure of entanglement of the subsystem A with B \cite{Mark}, so it is possible to interpret Eq. (\ref{eq:cre}) as a complete complementarity relation:
\begin{align}
         C_{re}(\rho) + S_{vn}(\rho) + P_{vn}(\rho) = \ln d. \label{eq:rel}
\end{align}  
Now, we can interpret $C_{re}(\rho)$ and $S_{vn}(\rho)$ in terms of quantum and classical uncertainties, respectively, i.e., $\mathcal{U}(\rho) := C_{re}(\rho) + S_{vn}(\rho)$\footnote{Here, we will not define $C_{re}(\rho) + S_{vn}(\rho)$ as $\mathcal{V}(\rho)$ because the first is not a variance.}. Following Ref. \cite{Korzekwa}, we can consider the dephasing map $D(\rho) = \sum_j \bra{j}\rho \ket{j} \ketbra{j}$. The projective measurements $\{ \ketbra{j} \}_{j = 1}^d$ related to the paths are a repeatable measurement, and so it is reasonable to demand that a second measurement should not reveal any quantum uncertainty in the state and is entirely classical. Thus, we can take $S(\rho||D(\rho)) = \min_{\iota \in I} S(\rho || \iota) =  C_{re}(\rho)$ as the quantum uncertainty and $S(\rho)$ as the classical uncertainty. If $\rho$ is pure, $S_{vn}(\rho) = 0$ and $\mathcal{U}(\rho) = C_{re}(\rho)$. On the other hand, if $[\rho, \ketbra{k}] = 0$, for some path index $k$, then $\rho$ is diagonal on the path basis. Hence $C_{re}(\rho) = 0$, since $S(\rho_{diag}) = S(\rho)$ and $\mathcal{U}(\rho) = S(\rho)$. Also, it is known that $C_{re}(\rho)$ is convex under classical mixtures \cite{Baumgratz} and $S(\rho)$ is concave under classical mixtures \cite{nielsen}. Hence, we can also interpret Eq. (\ref{eq:rel}) as a complete complementarity relation between uncertainties and predictability.
 
 Furthermore, we can use the fact that the coherences and the quantum correlations of $\rho$ give rise to quantum and classical uncertainties, respectively, to obtain a complete complementarity relation. In Ref. \cite{Maziero}, we obtained an incomplete complementarity relation using the $l_1$-norm as a measure of quantum coherence \cite{Baumgratz}, just by exploring the properties of the density matrix: since $\rho$ is positive semi-definite, we can use the fact that $\abs{\rho_{jk}} \le \sqrt{\rho_{jj}\rho_{kk}}$, $\forall j \neq k$ \cite{horn},  to obtain
 \begin{equation}
     C_{l_1}(\rho) \le \sum_{j \neq k} \sqrt{\rho_{jj} \rho_{kk}} \le d - 1
 \end{equation}
 which can be recast as a complementarity relation:
 \begin{align}
      C_{l_1}(\rho) +  P_{l_1}(\rho) \le d -1,
 \end{align}
 where $C_{l_1}(\rho) := \sum_{j \neq k} \abs{\rho_{jk}}$ is the $l_1$-norm quantum coherence, and $P_{l_1}(\rho) := d-1 - \sum_{j \neq k} \sqrt{\rho_{jj}\rho_{kk}}$ is a bona-fide measure of predictability. Now, we notice that
 \begin{equation}
     C_{l_1}(\rho) + P_{l_1}(\rho) = d - 1 + \sum_{j \neq k}(\abs{\rho_{jk}} - \sqrt{\rho_{jj}\rho_{kk}})
 \end{equation}
 can be rewritten as a CCR,
 \begin{equation}
     C_{l_1}(\rho) + W_{l_1}(\rho) + P_{l_1}(\rho) = d - 1,
 \end{equation}
 if we define $W_{l_1}(\rho) =  \sum_{j \neq k}(\sqrt{\rho_{jj}\rho_{kk}} - \abs{\rho_{jk}})$ as a measure of entanglement of the quanton, provided that the system is part of a bipartite pure quantum system. And we can see that $C_{l_1}(\rho)$ and $W_{l_1}(\rho)$ give rise to quantum and classical uncertainties by showing that these measures satisfy Luo's criteria: if $\rho$ is pure $\therefore$ $\abs{\rho_{jk}} = \sqrt{\rho_{jj}\rho_{kk}}$, $\forall j \neq k$, and $W_{l_1}(\rho) = 0$. In this case, we can interpret that $\rho$ is part of a bipartite pure separable quantum system. On the other hand, if $\rho$ is incoherent in the path basis, then $C_{l_1}(\rho) = 0$ and $W_{l_1}(\rho) =  \sum_{j \neq k}\sqrt{\rho_{jj}\rho_{kk}}$. For the extreme case $\rho = \sum_j \frac{1}{d}\ketbra{j}$, $W_{l_1}(\rho) = d - 1$ reaches its maximum. Now, the convexity of $C_{l_1}(\rho)$ was shown in Refs. \cite{Baumgratz, Maziero}. To show the concavity of $W_{l_1}(\rho)$, it is enough to note that $C_{l_1}(\rho) = \sum_{j \neq k} \abs{\rho_{jk}}$ and $f(x_1,..., x_d) = - \sum_{j\ne k}\sqrt{x_{j}x_{k}}$, with $x_j \in [0,1]$, are convex functions, therefore $-C_{l_1}(\rho)$  and $-f(x_1,..., x_d)  $ are concave \cite{Roberts}. Besides, it is worth mentioning that, in Ref. \cite{Leopoldo}, we succeed to show that $W_{l_1}$ is an entanglement monotone for bipartite pure cases, when restricted to the Schmidt's coefficients.

In addition, it is worth pointing out that Ref. \cite{Bera} obtained (incomplete) complementarity relations using the relative entropy and the $l_1$-norm quantum coherences as measures of visibility, while distinguishability measures were used for the `particleness' of the system. In contrast, in this work we use predicatibility measures for the particle aspect of the quanton. Besides, if the bipartite quantum system is mixed, the classical uncertainty measures can be taken as measures of the mixedness of the system, as in Refs. \cite{Dhar, Gamel}. For a recent discussion on this subject the reader is referred to Ref. \cite{Wang}.

To illustrate this complete complementarity relation, let us consider the state \cite{Giorgi} $\ket{\Phi(p,\epsilon)} = \sqrt{p \epsilon} \ket{0,0,0}_{A,B,C} + \sqrt{p(1 - \epsilon)}\ket{1,1,1}_{A,B,C} + \sqrt{(1 - p)/2}(\ket{1,1,0}_{A,B,C} + \ket{1,0,1}_{A,B,C})$, with $p, \epsilon \in [0,1]$. The reduced state $\rho_A$ is incoherent, meanwhile 
\begin{align}
\rho_B = \rho_C =& (p \epsilon + (1 - p)/2) |0 \rangle \langle 0 | + (p(1-\epsilon) + (1 - p)/2) |1 \rangle \langle 1 | \nonumber \\ 
& +(\sqrt{p(1-\epsilon)(1-p)/2)}|0 \rangle \langle 1 | + t.c.),
\label{eq:state}
\end{align}
where t.c. stands for the transpose conjugate. In Fig. \ref{fig:a} we plotted the coherence of $\rho_B, \rho_C$, as well as the correlation and predictability measures in Fig. \ref{fig:b} and \ref{fig:c}, respectively. By summing all three measures, we must saturate the complementarity relation,  as represented by the plane in Fig. \ref{fig:mes}, i.e., $C_{l_1}(\rho) + W_{l_1}(\rho) + P_{l_1}(\rho)$ is constant. 
\begin{figure}%
    \centering
    \subfloat[$C_{l_1}(\rho_B)$ as a function of $p,     \epsilon$.]{{\includegraphics[width=7cm]{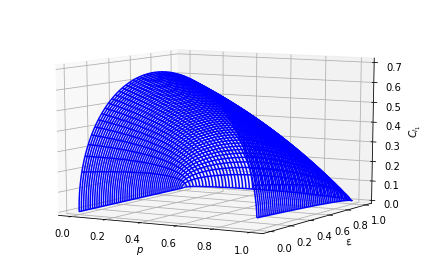}{\label{fig:a}} }}%
    \qquad
    \subfloat[$W_{l_1}(\rho_B)$ as a function of $p, \epsilon$.]{{\includegraphics[width=7cm]{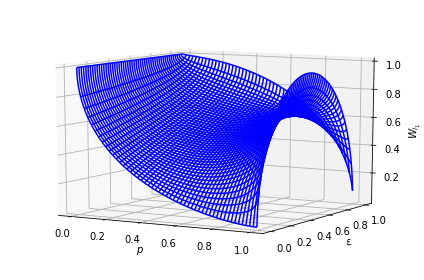}{\label{fig:b}} }}%
    \qquad
    \subfloat[$P_{l_1}(\rho_B)$ as a function of $p, \epsilon$.]{{\includegraphics[width=7cm]{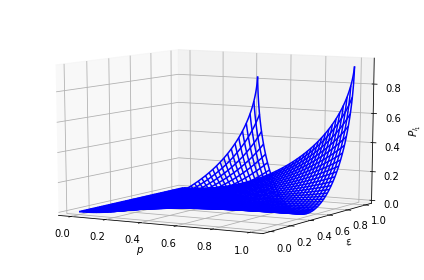}{\label{fig:c}} }}
    \qquad
    \subfloat[$C_{l_1}(\rho) + W_{l_1}(\rho) + P_{l_1}(\rho)$, as function of $p$ and $\epsilon$, is constant.]{{
    \includegraphics[width=7cm]{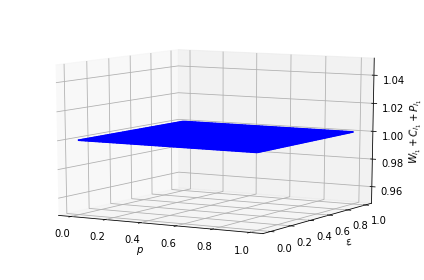}{\label{fig:mes}} }}
    \label{fig:example}
    \caption{(Color online) Quantum coherence and correlation and predictability measures, and their complementarity, for the state in Eq. (\ref{eq:state}).}
\end{figure}

Now, for a bipartite quantum system in the state $\ket{\Psi}_{A,B} = x\ket{0,1}_{A,B} + \sqrt{1 - x^2}\ket{1,0}_{A,B}$, with $x \in [0,1]$, we have
\begin{align}
    & W_{l_1}(\rho_A) = W_{l_1}(\rho_B) = 2x\sqrt{1 - x^2}, \label{ex1i} \\
    & P_{l_1}(\rho_A) = P_{l_1}(\rho_B) = 1 - 2x\sqrt{1 - x^2},\\
    & \mathcal{U}_c(\rho_A) = \mathcal{U}_c(\rho_B) = 2x^2(1 - x^2), \\
    & P_{l}(\rho_A) = P_{l}(\rho_B) = 1/2 - 2x^2(1 - x^2),\\
    & S_{vn}(\rho_A) = S_{vn}(\rho_B) = -x^2 \ln x^2 - (1 - x^2) \ln (1 - x^2), \\    
    & P_{vn}(\rho_A) = P_{vn}(\rho_B) = \ln 2 + x^2 \ln x^2 + (1 - x^2) \ln (1 - x^2), \label{ex1f}
\end{align}
where $\mathcal{U}_q(\rho_A) = \frac{1}{2}E^2(\Psi_{A,B})$, with $E(\Psi_{A,B})$ being the concurrence measure of entanglement \cite{Woot} and $W_{l_1}(\rho_A) = C^c_{l_1}(\rho_{A,B})$, with $C^c_{l_1}(\rho_{A,B})$ being the $l_1$-norm correlated coherence \cite{Tan}. In Fig. \ref{fig:meas}, we plotted the different measures of predictability and correlation for comparison.

\begin{figure}[t]
\includegraphics[scale=0.55]{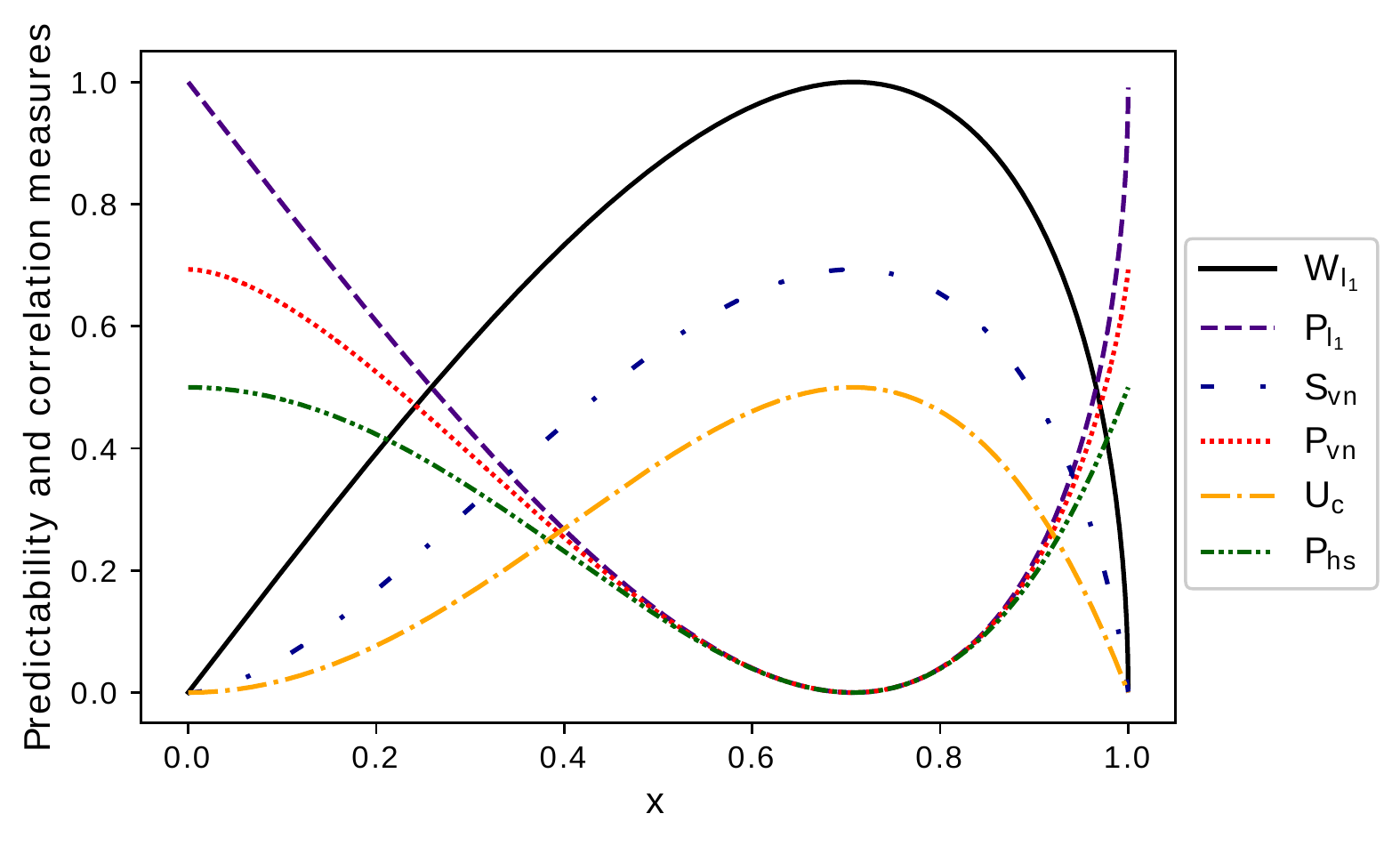}
\caption{(Color online) Comparison between different measures of predictability and their respective correlation measures of Eqs. (\ref{ex1i})-(\ref{ex1f}).}
\label{fig:meas}
\end{figure}

\subsection{The generalized Gell-Mann's matrices and its relation with complementarity}
From the variance of the generalized Gell-Mann's matrices (GMM), we will obtain the complete complementarity relation for the Hilbert-Schmidt norm (or $l_2$-norm) given by
\begin{align}
    P_l(\rho) + C_{hs}(\rho) + S_l(\rho) = \frac{d - 1}{d},
\end{align}
where $P_l(\rho) := S_l^{max} - S_l(\rho_{diag})$ is the predictability already defined in Sec. \ref{sec:comp}, $C_{hs}(\rho) := \sum_{j \neq k} \abs{\rho_{jk}}^2$ is the Hilbert-Schmidt quantum coherence and $S_l(\rho) = 1 - \Tr \rho^2$ is the linear entropy. By doing this, we generalize the relationship between wave-particle quantifiers and uncertainties explored in Refs. \cite{Bjork, Liu} for qubits. Let $\{\ket{j}\}_{j=1}^{d}$ be any given vector basis for $\mathbb{C}^{d}$. Using this basis, we can define the generalized
GMM as \cite{bertlmann}:
\begin{align}
\Gamma_{m}^{d} & :=  \sqrt{\frac{2}{m(m+1)}}\sum_{l=1}^{m+1}(-m)^{\delta_{l,m+1}}\ketbra{l},\\
\Gamma_{j,k}^{s} & :=  \ketbra{j}{k} + \ketbra{k}{j},\\
\Gamma_{j,k}^{a} & :=  -i(\ketbra{j}{k} - \ketbra{k}{j}),
\end{align}
where, if not stated otherwise, we use the following possible values for the indexes $m,j,k$: $ m=1,\cdots,d-1\mbox{ and }1\le j<k\le d$. For $d = 2$, the generalized Gell-Mann's matrices reduce to the well known Pauli matrices. Besides, one can easily see that these matrices are Hermitian and traceless. If we use $\Gamma_{0}^{d}$ for the $d \times d$ identity matrix, it is not difficult to see that under the Hilbert-Schmidt's inner product,
\begin{equation}
\langle A|B\rangle_{hs}:=\mathrm{Tr}(A^{\dagger}B),
\end{equation}
with $A,B\in\mathbb{C}^{d \times d}$, the set
\begin{equation}
\left\{ \frac{\Gamma_{0}^{d}}{\sqrt{d}},\frac{\Gamma_{m}^{d}}{\sqrt{2}},\frac{\Gamma_{j,k}^{\tau}}{\sqrt{2}}\right\} ,
\end{equation}
with $\tau=s,a$, forms an orthonormal basis for $\mathbb{C}^{d \times d}$
\cite{bertlmann}. So, any matrix $X\in\mathbb{C}^{d \times d}$
can be decomposed in this basis. In particular, we can decompose the density operator as follows
\begin{equation}
\rho =\frac{1}{d}\mathrm{Tr}(\rho)\Gamma_{0}^{d}+\frac{1}{2}\sum_{j}\langle\Gamma_{j}^{d}|\rho\rangle\Gamma_{j}^{d}+\frac{1}{2}\sum_{k,l,\tau}\langle\Gamma_{k,l}^{\tau}|\rho\rangle\Gamma_{k,l}^{\tau}.
\end{equation}
Now, since
\begin{align}
    & \sum_{m = 1}^{d - 1} \expval{\Gamma^d_m}^2 = 2\big(\sum_j \rho^2_{jj} - 1/d\big) = 2 P_l (\rho),\label{eq:varp}  \\
    & \sum_{j<k}\big(\expval{\Gamma^s_{j,k}}^2 + \expval{\Gamma^a_{j,k}}^2\big) = 2 \sum_{j \neq k}\abs{\rho_{jk}}^2 = 2 C_{hs}(\rho), \label{eq:varc} 
\end{align}
where $C_{hs}(\rho) = \sum_{j \neq k}\abs{\rho_{jk}}^2$ is the Hilbert-Schmidt (or $l_2$-norm) quantum coherence \cite{Jonas}, and $\expval{\Gamma} = \Tr (\rho \Gamma)$. The sum of the variances of these subsets of observables is given as follows:
\begin{align}
 & \sum_{m} \mathcal{V}(\rho,\Gamma^d_m) = \frac{2(d -1)}{d} - 2 P_l(\rho), \label{eq:gdia}\\
 & \sum_{j<k}\Big(\mathcal{V}(\rho,\Gamma^s_{j,k}) + \mathcal{V}(\rho,\Gamma^a_{j,k})\Big) = 2(d - 1) - 2 C_{hs}(\rho). \label{eq:goff}
\end{align}
By summing Eqs. (\ref{eq:gdia}) and (\ref{eq:goff}), we obtain the complete complementarity relation
\begin{equation}
P_l(\rho) + C_{hs}(\rho) + \mathcal{C}(\rho, \Gamma)= \frac{d -1}{d}. \label{eq:ccr}
\end{equation}
where $ \mathcal{C}(\rho, \Gamma):= \frac{1}{2}\sum_{m} \mathcal{V}(\rho,\Gamma^d_m) + \frac{1}{2}\sum_{j<k}(\mathcal{V}(\rho,\Gamma^s_{j,k}) + \mathcal{V}(\rho,\Gamma^a_{j,k})) - (d - 1)$ is a measure of classical uncertainty, once it satisfies Luo's criteria. In addition, Eq. (\ref{eq:ccr}) is equivalent to the complete complementarity relation obtained by us in Ref. \cite{Marcos}, exploring the purity of multipartite quantum systems, where $\mathcal{C}(\rho, \Gamma) = 1 - Tr \rho^2 =  S_l$ is measuring the quantum correlations of $\rho$ with other systems, provided that the quanton is part of a multipartite pure quantum system. Besides, for bipartite pure quantum systems,  the CCR in Eq. (\ref{eq:ccr}) is equivalent to that obtained by Jakob and Bergou \cite{Jakob} using the concurrence as a measure of quantum correlation. As $\rho$ represents a mixed quantum state in general, $P_l(\rho) + C_{hs}(\rho) \le \frac{d-1}{d}$ and we have the following uncertainty relation for the generalized Gell-Mann's matrices:
\begin{equation}
\sum_{m} \mathcal{V}(\rho,\Gamma^d_m) + \sum_{j<k}(\mathcal{V}(\rho,\Gamma^s_{j,k}) + \mathcal{V}(\rho,\Gamma^a_{j,k})) \ge 2(d - 1). \label{eq:uncgell}
\end{equation}
For instance, for $d = 2$, the generalized Gell-Mann's matrices reduce to the well known Pauli matrices $(\sigma_x, \sigma_y, \sigma_z)$, and Eq. (\ref{eq:uncgell}) reduces to
\begin{align}
    \mathcal{V}(\rho, \sigma_x) + \mathcal{V}(\rho, \sigma_y) + \mathcal{V}(\rho, \sigma_z) \ge 2.
\end{align}
While, from Eqs. (\ref{eq:varp}) and (\ref{eq:varc}) we have a trade-off between the variances of the Pauli matrices and the measures that quantify complementarity:
\begin{align}
    \mathcal{V}(\rho, \sigma_x) + \mathcal{V}(\rho, \sigma_y) + \mathcal{V}(\rho, \sigma_z) = 3 - 2(C_{hs}(\rho) + P_l(\rho)),
\end{align}
which is a equivalent to the equation (8) of Ref. \cite{Liu}. One can see this by noting that $P_l = \frac{1}{2}P^2$ and $C_{hs} = \frac{1}{2}V^2$, where $P,V$ are the well known predictability and visibility measures used in Refs. \cite{Yasin, Liu}. The only difference is that the authors in Ref. \cite{Liu} consider the visibility operator as $\hat{V} = \cos \phi \sigma_x + \sin \phi \sigma_y$ and the predictability operator as $\hat{P} = \sigma_z$ such that the trade-off relation adds up to $2$: $P^2 + V^2 + \mathcal{V}(\hat{V}) + \mathcal{V}(\hat{P}) = 2$. Therefore, one can see that Eq. (\ref{eq:ccr}) generalizes such relationship between complementarity measures and uncertainties for $d$-dimensional quantum systems.

\section{Conclusions}
\label{sec:conc}
The relationships between Luo's criteria for quantum and classical uncertainties and D\"urr-Englert et al.'s criteria for wave-particle duality together with the criteria for entanglement measures were discussed. This lead naturally to the notion that quantum entanglement gives rise to local classical uncertainties, provided that the quanton is part of a pure bipartite quantum system, while quantum coherence gives rise to quantum uncertainties. In addition, we showed that the quantum uncertainty of all $d$-paths is equivalent to the Wigner-Yanase quantum coherence, meanwhile the classical uncertainty can be taken as a correlation quantifier. By exploring the relation between uncertainties and wave-particle duality, we obtained quantum correlation measures completing the $l_1$-norm and $l_2$-norm complementarity relations, that can also be taken as classical uncertainty measures. Therefore, our work connects two crucial concepts that were guidelines for the development of Quantum Mechanics, which have been recently formalized: the separation of the total uncertainty in its classical and quantum parts developed by Luo; and the quantification of complementarity made by several authors, as emphasized in Sec. \ref{sec:intro}. 

Besides, we believe that our results will help in the realization that neither complementarity nor uncertainty can be considered a more fundamental aspect of quantum theory, what is still a vivid debate \cite{Walther, Storey, Rempe, Vathsan, Taqu}. Instead, complementarity and uncertainty are intrinsically connected to each other, for both follow directly from the mathematical structure of quantum mechanics. Lastly, to summarize the role of the different entropies in uncertainty and complementarity relations of multipartite pure quantum states explored in this paper, we use Table (\ref{tab:entro}).
\begin{widetext}
\begin{table}[htp]
\centering
\vspace{0.5cm}
\begin{tabular}{l|l|l|}
 Entropy & Usefulness in uncertainty relations & Usefulness in complementarity relations \\
 \hline
 $S_l(\rho)$ & Measure of classical uncertainty  & Measure of correlation with other systems \\
 $S_l(\rho_{diag})$ & Measure of total uncertainty & Can be used to define predictability  \\
 $S_{vn}(\rho)$ & Measure of classical uncertainty & Measure of correlation with other systems \\
 $S_{vn}(\rho_{diag})$ & Measure of total uncertainty & Can be used to define predictability 
\end{tabular}
\caption{Role of different entropies in uncertainty and complementarity relations.}
\label{tab:entro}
\end{table}
\end{widetext}
Lastly,  in this work we noticed that the overall distinction between classical, total, and quantum uncertainty is that classical uncertainty is characterized from the eigenvalues of the density matrix, the total uncertainty is characterized from the diagonal elements of the density matrix (i.e., the probability distribution acquired in a experiment) and the quantum uncertainty is defined as the difference between the two using whichever measure of uncertainty is most convenient.
  
\begin{acknowledgments}
This work was supported by the Coordena\c{c}\~ao de Aperfei\c{c}oamento de Pessoal de N\'ivel Superior (CAPES), process 88882.427924/2019-01, and by the Instituto Nacional de Ci\^encia e Tecnologia de Informa\c{c}\~ao Qu\^antica (INCT-IQ), process 465469/2014-0.
\end{acknowledgments}


\begin{thebibliography}{10}
\bibliographystyle{apsrev4-1}


\bibitem{Robertson} H. P. Robertson, The uncertainty principle, Phys. Rev. 34, 163 (1929).
\bibitem{theurer} T. Theurer, N. Killoran, D. Egloff, and M. B. Plenio, Resource theory of superposition, Phys. Rev. Lett. 119, 230401 (2017).
\bibitem{Luo} S. L. Luo, Quantum versus classical uncertainty, Theor. Math. Phys. 143, 681 (2005).
\bibitem{wigner} E. P. Wigner and M. M. Yanase, Information contents of distributions, Proc. Nat. Acad. Sci. USA 49, 910 (1963).
\bibitem{Luu} S. L. Luo, Heisenberg uncertainty relation for mixed states, Phys. Rev. A 72, 042110 (2005).
\bibitem{Korzekwa} K. Korzekwa, M. Lostaglio, D. Jennings, and T. Rudolph, Quantum and classical entropic uncertainty relations, Phys. Rev. A 89, 042122 (2014).
\bibitem{Paul} P. Busch, P. J. Lahti, P. Mittelstaedt, \textit{The Quantum Theory of Measurement}, (Springer, Berlin, 1996).
\bibitem{Bohr} N. Bohr, The quantum postulate and the recent development of atomic theory, Nature 121, 580 (1928).
\bibitem{Leblond} According with J. -M. L\'evy-Leblond, the term "quanton" was given by M. Bunge. The usefulness of this term is that one can refer to a generic quantum system without using words like particle or wave: J.-M. L\'evy-Leblond, On the Nature of Quantons, Science and Education 12, 495 (2003).
\bibitem{Wootters} W. K. Wootters and  W. H. Zurek,  Complementarity in the double-slit experiment: Quantum nonseparability and a quantitative statement of Bohr's principle, Phys. Rev. D 19, 473 (1979).
\bibitem{Engle} B.-G. Englert, Fringe visibility and which-way information: An inequality, Phys. Rev. Lett. 77, 2154 (1996).
\bibitem{Bethold} B.-G. Englert and J. A. Bergou, Quantitative quantum erasure, Optics Communications 179, 337-355 (2000).
\bibitem{Yasin} D. M. Greenberger and  A. Yasin, Simultaneous wave and particle knowledge in a neutron interferometer, Phys. Lett. A 128, 391 (1988).
\bibitem{dur} S. D\"urr, Quantitative wave-particle duality in multibeam interferometers, Phys. Rev. A, 64, 042113 (2001).
\bibitem{englert} B.-G. Englert, D. Kaszlikowski, L. C. Kwek, and W. H. Chee, Wave-particle duality in multi-path interferometers: General concepts and three-path interferometers, Int. J. Quant. Inf. 6, 129 (2008).
\bibitem{Baumgratz} T. Baumgratz, M. Cramer,  and M. B. Plenio, Quantifying coherence, Phys. Rev. Lett. 113, 140401 (2014).
\bibitem{Bera} M. N. Bera, T. Qureshi, M. A. Siddiqui, and A. K. Pati, Duality of Quantum coherence and path distinguishability, Phys. Rev. A 92, 012118 (2015).
\bibitem{Bagan} E. Bagan, J. A. Bergou, S. S. Cottrell, and M. Hillery, Relations between coherence and path information, Phys. Rev. Lett. 116, 160406 (2016).
\bibitem{Tabish} T. Qureshi,  Coherence, interference and visibility, Quanta 8, 24 (2019).
\bibitem{Mishra} S. Mishra, A. Venugopalan, and T. Qureshi, Decoherence and visibility enhancement in multi-path interference, Phys. Rev. A 100, 042122 (2019).
\bibitem{Xu} B.-M. Xu, Z. C. Tu, and J. Zou,  Duality in quantum work, Phys. Rev. A 101, 022113 (2020).
\bibitem{Vlatko} B. C. Hiesmayr and V. Vedral,  Thermodynamical versus optical complementarity, arXiv:quant-ph/0501015 (2005).
\bibitem{Angelo} R. M. Angelo and A. D. Ribeiro, Wave-particle duality: An information-based approach, Found. Phys. 45, 1407 (2015).
\bibitem{Coles} P. J. Coles, Entropic framework for wave-particle duality in multipath interferometers, Phys. Rev. A 93, 062111 (2016).
\bibitem{Hillery} E. Bagan, J. Calsamiglia,  J. A. Bergou, and M. Hillery, Duality games and operational duality relations, Phys. Rev. Lett. 120, 050402 (2018).
\bibitem{Qureshi} P. Roy and T. Qureshi,  Path predictability and quantum coherence in multi-slit interference, Phys. Scr. 94, 095004 (2019),
\bibitem{Maziero} M. L. W. Basso, D. S. S. Chrysosthemos, and J. Maziero,  Quantitative wave-particle duality relations from the density matrix properties, Quant. Inf. Process. 19, 254 (2020).
\bibitem{Lu} X. L\"u, Quantitative wave-particle duality as quantum state discrimination, Phys. Rev. A 102, 022201 (2020).
\bibitem{streltsov} A. Streltsov, G. Adesso, and M. B. Plenio, Quantum coherence as a resource, Rev. Mod. Phys. 89, 041003 (2017).

\bibitem{Janos} M. Jakob and J. A. Bergou, Quantitative complementarity relations in bipartite systems: Entanglement as a physical reality, Opt. Comm. 283, 827 (2010).
\bibitem{Bruss} D. Bruss, Characterizing entanglement, J. Math. Phys. 43, 4237 (2002).
\bibitem{Horodecki} R. Horodecki, P. Horodecki, M. Horodecki, and K. Horodecki, Quantum entanglement, Rev. Mod. Phys. 81, 865 (2009).

\bibitem{Jakob} M. Jakob, and J. A. Bergou,  Complementarity and entanglement in bipartite qudit systems, Phys. Rev. A 76, 052107 (2007).
\bibitem{Marcos} M. L. W. Basso and J. Maziero, Complete complementarity relations for multipartite pure states,  J. Phys. A: Math. Theor. 53 465301 (2020) .
\bibitem{Eberly} X.-F. Qian, A. N. Vamivakas, and J. H.  Eberly, Entanglement limits duality and vice versa, Optica 5, 942 (2018).

\bibitem{yu_Cwy} C.-s. Yu, Quantum coherence via skew information and its polygamy, Phys. Rev. A 95, 042337 (2017).
\bibitem{Sun} S Luo and Y. Sun, Quantum coherence versus quantum uncertainty, Phys. Rev. A 96, 022130 (2017).

\bibitem{Messiah} A. Messiah, \textit{Quantum Mechanics} (Dover, New York, 2014).
\bibitem{nielsen} M. A. Nielsen, I. L. Chuang, \textit{Quantum Computation and Quantum Information} (Cambridge University Press, Cambridge, 2000).
\bibitem{Mark} J. A. Bergou and M. Hillery, \textit{Introduction to the Theory of Quantum Information Processing} (Springer, New York, 2013).
\bibitem{chen} B. Chen and S.-M. Fei, Total variance and invariant information in complementary measurements, Commun. Theor. Phys. 72, 065106 (2020).
\bibitem{Schneider} T. Schneider and S. M. Griffies, A conceptual framework for predictability studies, Journal of Climate 12, 3133 (1999).
\bibitem{John} J. von Neumann, \textit{Mathematical Foundations of Quantum Mechanics}, (Princeton University Press, 1955).
\bibitem{Dieguez} P. R. Dieguez, R. M. Angelo, Information-reality complementarity: The role of measurements and quantum reference frames, Phys. Rev. A 97, 022107 (2018).
\bibitem{Shu} S. Luo, Quantum uncertainty of mixed states based on skew information, Phys. Rev. A 73, 022324 (2006).
\bibitem{Mintert} F. Mintert, C. Viviescas, and A. Buchleitner, Basic Concepts of Entangled States in \textit{Entanglement and Decoherence: Foundations and Modern Trends} (Springer, 2009). 
\bibitem{Leopoldo} M. L. W. Basso and  J. Maziero, Entanglement Monotones from Complementarity Relations, arXiv:2012.14471 [quant-ph] (2020).
\bibitem{Brukner} C. Brukner and  A. Zeilinger, Operationally invariant information in quantum measurements, Phys. Rev. Lett. 83, 3354 (1999).
\bibitem{Sakurai} J. Napolitano and J. J. Sakurai, \textit{Modern Quantum Mechanics: second edition} (Cambridge University Press, Cambridge, 2017).
\bibitem{Jonas} J. Maziero, Hilbert-Schmidt quantum coherence in multi-qudit systems, Quantum Inf. Process. 16, 274 (2017).

\bibitem{horn} R. A. Horn and C. R. Johnson, \textit{Matrix Analysis} (Cambridge University Press, Cambridge, 2013).

\bibitem{Dhar} U. Singh, M. N. Bera, H. S. Dhar, and A. K. Pati, Maximally coherent mixed states: Complementarity between maximal coherence and mixedness. Phys. Rev. A 91, 052115 (2015).
\bibitem{Gamel} O. Gamel and D. F. V. James, Measures of Quantum State Purity and Classical Degree of Polarization, Phys. Rev. A 86, 033830 (2012).
\bibitem{Wang} W. Wu and J. Wang, Wave-Particle-Entanglement-Ignorance Complementarity for General Bipartite Systems, Entropy 22, 813 (2020).

\bibitem{Roberts} A. W. Roberts and D. Varberg, \textit{Convex Functions} (Academic Press, 1973).
\bibitem{Giorgi} G. L. Giorgi,  Monogamy properties of quantum and classical correlations, Phys. Rev. A 84, 054301 (2011).
\bibitem{Woot} W. K. Wootters, Entanglement of formation of an arbitrary state of two qubits, Phys. Rev. Lett. 80, 2245 (1998).
\bibitem{Tan} K. C. Tan, H. Kwon, C.-Y.Park, and H. Jeong, Unified view of quantum correlations and quantum coherence, Phys. Rev. A 94, 02 2329 (2016).
\bibitem{Bjork} G. Bjork, J. Soderholm, A. Trifonov, T. Tsegaye, and A. Karlsson,  Complementarity and the uncertainty relations, Phys. Rev. A 60, 1874 (1999).
\bibitem{Liu} H.-Y. Liu, J.-H. Huang, J.-R. Gao, M. S. Zubairy, and S.-Y. Zhu, Relation between wave-particle duality and quantum uncertainty, Phys. Rev. A 85, 022106 (2012).

\bibitem{bertlmann} R. A. Bertlmann and P. Krammer, Bloch vectors for qudits, J. Phys. A: Math. Theor. 41, 235303 (2008).

\bibitem{Walther} B.-G. Englert, M. O. Scully, and H. Walther, Complementarity and uncertainty, Nature 375, 367 (1995).
\bibitem{Storey} E. P. Storey, S. M. Tan, M. J. Collett, and D. F. Walls, Complementarity and uncertainty, Nature 375, 368 (1995).
\bibitem{Rempe} S. D\"urr and S. G. Rempe, Can wave-particle duality be based on the uncertainty relation?, Am. J. Phys. 68, 1021 (2000).
\bibitem{Vathsan} T. Qureshi and R. Vathsan,  Einstein's recoiling slit experiment, complementarity and uncertainty, Quanta 2, 58 (2013).
\bibitem{Taqu} R. Vathsan and  T. Qureshi, Aspects of complementarity and uncertainty, Int. J. Quantum Inform. 14, 1640031 (2016).


\end{thebibliography}
\end{document}